\theoremstyle{plain}
\DeclareMathSymbol{\flecha}{\mathbin}{AMSa}{"4C}
\newcommand{\G}{\mathcal G}
\newtheorem{theorem}{Theorem}
\newtheorem{definition}{Definition}
\newtheorem{lemma}{Lemma}
\newtheorem{proposition}{Proposition}
\newcounter{example}
\newcommand{\denselist}{\topsep 0pt\itemsep 0pt}
\newcommand{\tup}[1]{\langle #1 \rangle}
\newcommand{\set}[1]{\{#1\}}
\newcommand{\setdef}[2]{\{#1\; : \; #2 \}}
\renewcommand{\models}{\vDash}
\newcommand{\struc}{\text{Struc}}
\newcommand{\FO}{\text{FO}}
\newcommand{\A}{\mathcal{A}}
\newcommand{\B}{\mathcal{B}}
\newcommand{\qr}{\text{q.r.}}
\title{A sufficient condition for first order non-definability of arrowing problems}
\author{Nerio Borges \\
        Departamento de Matem\'aticas \\
        Universidad Sim\'on Bol\'{\i}var \\ 
        Caracas, Venezuela \\ 
        \url{nborges@usb.ve} 
}
\begin{document}

\maketitle

\abstract{We here present a sufficient condition for general arrowing problems
to be non definable in first order logic, based in well known
tools of finite model theory e.g. Hanf's Theorem and
known concepts in finite combinatorics, like senders and determiners.}

\section{Introduction}

{\sc Arrowing} is the problem of deciding, given three finite, undirected, simple graphs $F,G,H$
if for every coloring of the edges of $F$ with two colors (e.g. red and blue) 
a red $G$ or a blue $H$ occurs. If it is the case,
we write $F\rightarrow (G,H)$. If not, then we write $F\nrightarrow (G,H)$.

If we let $G,H$ range in a class of graphs $\Omega$, 
then we can denote the restricted resulting problem as
$\text{\sc Arrowing}_\Omega$. 
We can even fix the graphs $G,H$ and, given a graph $F$, 
ask whether $F\rightarrow (G,H)$ or not. Denote this problem as
$\text{\sc Arrowing}(G,H)$. 
Thus $\text{\sc Arrowing}_\Omega(G,H)$ is the problem of deciding,
given a graph $F$, whether $F\rightarrow (G,H)$ or not for a pair
of fixed graphs $G,H$ in $\Omega$ ($G,H$ must be given as a part of the input).

The complexity of {\sc Arrowing} has been widely studied. 
Some arrowing problems are
known to be complete via polynomial many-one reductions in complexity
classes like P, NP \cite{Burr} and $\Pi_2^p$ \cite{Schaefer},
and the problem $\text{\sc Arrowing}(G,H)$, also known as 
the {\sc Monochromatic Triangle} has been proved NP complete via
first order reductions \cite{medina:thesis}.

We here present a sufficient condition for general arrowing problems
to be non definable in first order logic, based in well known
tools of finite model theory e.g. Hanf's Theorem and
known concepts in finite combinatorics.

\section{Preliminaries}\label{sec:preliminaries}

This section is an attempt to keep this work self-contained. 
the subsection \ref{subs:graphs} deals with graphs and introduces a non-standard
notation for some graph operations.

\subsection{Preliminaries in logic}

A {\em vocabulary} is a tuple of symbols
\[
\tau=\tup{R_1^{a_1},R_2^{a_2},\ldots,f_1^{b_1},f_2^{b_2}\ldots,c_1,c_2\ldots}
\]
where
each $R_j$ is a relational symbol of arity $a_j$,
each $f_k$ is a function symbol of arity $b_j$
and each $c_i$ is a
a constant symbol. 

If $\tau$ has no function symbols, we call it a {\em relational} vocabulary. 
A vocabulary is {\em finite} if it consists of a finite set of symbols.
From now on the greek letters $\tau$ and $\sigma$ will denote finite relational vocabularies.

A structure for $\tau$, also called a $\tau$-structure, is a tuple 
$\A=\tup{|\A|,R_1^\A,\ldots,R_r^\A,c_1^\A,\ldots,c_s^\A}$
where $|\A|$ is the universe (or domain) of $\A$, each $R_j^\A\subseteq|\A|^{a_j}$
is a $a_j$-ary relation over $|\A|$, and each $c_j\in|\A|$ is an
element of $|\A|$. 

For vocabulary $\tau$, $\struc(\tau)$ denotes the class of all
finite structures with size $\|\A\|\geq2$, i.e.\ structures whose
universe is an initial segment $[n]=\{0,1,\ldots,n-1\}$ 
of the set $\mathbb N$ of the natural numbers with $n\geq2$.
We consider here only finite structures.

The {\em language} $\FO(\tau)$ is the set of all
well-formed first order formulas over the vocabulary $\tau$.

If $\tau$ is relational, then its {\em terms} are either first order variables
or constants symbols from $\tau$.
An {\em atomic} formula over vocabulary $\tau$ has the form
$P(t_1,\ldots t_k)$ with $P$ a $k$-ary relational symbol and $t_1,\ldots,t_k$ are terms.
A {\em literal} is an atomic formula (and then we say it is {\em positive}) 
or the negation of an atomic formula (and then we say it is {\em negative}).

\subsection{Graphs}\label{subs:graphs}

A {\em graph} is a structure for the vocabulary
$\sigma=\tup{E}$ consisting of one binary relation $E$ 
i.e. a pair $\A=\tup{|\A|, E^\A}$ where $|\A|$ is an initial
segment of $\mathbb N$ called the set of {\em vertices} of $\A$, and
$E^\A$ is a subset of $|\A|^2$. 
Tipically, graphs are denoted by latin capital letters as $G,F,H$.
When we consider a graph $G$, we often denote its vertex set as $V_G$
and the set of all its edges as $E_G$.
A simple, undirected graph $G$ is a graph where the relation
$E(G)$ is irreflexive and symmetric.

We need to define two binary operations between graphs. Intuitively, the idea
is to ``join'' both graphs together identifying two edges.
\begin{definition}\label{de:jointgraphs}
Given a graph $G$ with a distinguished edge $(a,b)$ and a graph $H$ with distinguished edge $(c,d)$, 
we define the graph 
$F=G(a,b)\oplus (c,d)H$, the result of identifying
edges $(a,b)$ and $(c,d)$ (also identifying vertices $a$ with $c$ and $b$ with $d$) in the following way:
\begin{itemize}\denselist
\item The set of vertices is the disjoint union of $V_G$ and $V_H$ 
	without the vertices representing $a$ and $b$:
	\[
	V_F=((V_G\times\set{0})\cup(V_H\times\set{1}))-\set{(a,0),(b,0)}
	\]
\item The set of edges remains the same for $H$; as for the $G$ part, any edge incident
	in $a$ will be now incident in our copy of $c$, and any edge incident in $b$ will
	be now incident in our copy of $d$:
	\begin{alignat*}{2}
	E_F = & \setdef{\tup{(u,0),(v,0)}}{u,v\not\in\set{a,b},(u,v)\in E_G}
				\cup \setdef{\tup{(u,1),(v,1)}}{(u,v)\in E_F}\\
			& \cup \setdef{\tup{(u,0),(c,1)},\tup{(c,1),(u,0)}}{(u,a)\in E_G}\\
			& \cup \setdef{\tup{(u,0),(d,1)},\tup{(d,1),(u,0)}}{(u,b)\in E_G}
	\end{alignat*}
\end{itemize}
\end{definition}
In figure \ref{fig:example-ident}, $F$ is the graph obtained when one identifies edge $(1,2)$
in $G$ with edge $(1,2)$ in $H$.

\begin{figure}
\begin{center}  
\includegraphics[height=3in,width=4in,angle=0]{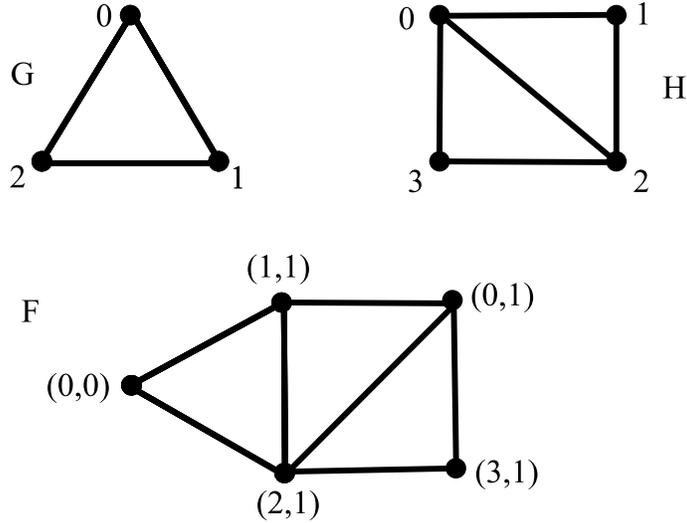}  
\caption{\small \sl $F=G(2,1)\oplus(2,1) H$\label{fig:example-ident}}  
\end{center}  
\end{figure} 

\begin{lemma}\label{le:associativity}
Suppose $F_1,F_2$ and $F_3$ are graphs. If $F_1$ has a distinguished edge $(a_1,b_1)$, 
$F_2$ has two different distinguished edges $(a_2,b_2)$ and $(a_3,b_3)$,
and $F_3$ has distinguished edge $(a_4,b_4)$
respectively, then:
\begin{equation}\label{eq:assoc}
F_1(a_1,b_1)\oplus (a'_2,b'_2)\left[F_2(a_3,b_3)\oplus (a_4,b_4)F_3\right]\cong
\left[F_1(a_1,b_1)\oplus (a_2,b_2)F_2\right](a'_3,b'_3)\oplus (a_4,b_4) F_3
\end{equation}
Where $(a'_2,b'_2)$ is the edge corresponding to $(a_2,b_2)$ in $F_2(a_3,b_3)\oplus (a_4,b_4)F_3$
and $(a'_3,b'_3)$ is the edge corresponding to $(a_3,b_3)$ in $F_1(a_1,b_1)\oplus (a_2,b_2)F_2$. 
\end{lemma}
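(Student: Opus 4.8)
The plan is to avoid a symbol-by-symbol comparison of the two nested expressions and instead exhibit a single ``simultaneous'' amalgam to which both sides are isomorphic. Concretely, I would define a graph $Q$ directly as a quotient of the disjoint union $F_1 \sqcup F_2 \sqcup F_3$: the vertex set is $(V_{F_1}\times\set{1})\cup(V_{F_2}\times\set{2})\cup(V_{F_3}\times\set{3})$ modulo the smallest equivalence relation $\sim$ identifying $(a_1,1)$ with $(a_2,2)$, $(b_1,1)$ with $(b_2,2)$, $(a_3,2)$ with $(a_4,3)$, and $(b_3,2)$ with $(b_4,3)$; the edge set is the image under the quotient map of all the original edges of $F_1,F_2,F_3$, so that the two identified distinguished edges on each side collapse to a single edge. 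The graph $Q$ is manifestly symmetric in the roles of the left and right groupings, so it is the right target.

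First I would verify that $\sim$ has the expected classes, i.e.\ that each identification merges exactly two labelled points, except when the two distinguished edges of $F_2$ share a vertex, in which case three labelled points collapse to one class; this is the only place the hypothesis that $(a_2,b_2)$ and $(a_3,b_3)$ are \emph{different} edges of $F_2$ is used, and one checks the resulting class is still well defined and that no edge is accidentally identified with a non-edge. Next, for each side of \eqref{eq:assoc} I would write down the explicit vertex bijection to $Q$. For the right-hand side this means unwinding the definition of the inner graph $N=F_1(a_1,b_1)\oplus(a_2,b_2)F_2$, reading off how the surviving copy of the edge $(a_3,b_3)$ becomes $(a'_3,b'_3)$, and then unwinding the outer $\oplus$ with $F_3$; the nested tags $0,1$ produced by Definition~\ref{de:jointgraphs} are matched to the labels $1,2,3$ of $Q$ by sending a kept $F_3$-vertex to label $3$, a kept non-endpoint $F_2$-vertex to label $2$, and a kept non-endpoint $F_1$-vertex to label $1$. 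The analogous unwinding is carried out for the left-hand side, starting from $M=F_2(a_3,b_3)\oplus(a_4,b_4)F_3$ and its surviving edge $(a'_2,b'_2)$.

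Finally I would check that each of these two maps carries the edge set bijectively onto the edge set of $Q$, using the edge-redirection clauses of Definition~\ref{de:jointgraphs}: an edge incident to a deleted endpoint on either side is exactly an edge whose quotient image is incident to the corresponding merged class in $Q$. Composing one bijection with the inverse of the other then yields the desired isomorphism in \eqref{eq:assoc}. The main obstacle is purely bookkeeping: keeping track of which endpoint survives under the directed $\oplus$ (the right operand is kept whole, the left loses its two endpoints), and correctly identifying the ``corresponding'' edges $(a'_2,b'_2)$ and $(a'_3,b'_3)$ in the degenerate case where a distinguished edge of $F_2$ shares a vertex with the other, so that a redirected endpoint of one distinguished edge is itself an endpoint slated for identification. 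Verifying that the induced $\sim$-classes agree regardless of grouping order — which amounts to transitivity of $\sim$ absorbing the chain $a_1\sim a_2=a_3\sim a_4$ — is what makes both groupings land on the same $Q$.
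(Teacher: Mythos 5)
Your proposal is correct, but note that the paper itself offers no argument here: its proof of Lemma~\ref{le:associativity} consists of the single word ``Straightforward.'' Your simultaneous-amalgam strategy is a genuine proof where the paper has none, and it is arguably the cleanest available route: rather than unwinding the two nested applications of Definition~\ref{de:jointgraphs} and matching their tag structures against each other directly, you exhibit one symmetric target $Q$ (the quotient of $F_1\sqcup F_2\sqcup F_3$ by the equivalence generated by the four endpoint identifications) and map each side onto it. This buys two things. First, the verification that both groupings agree reduces to the transitivity of the generated equivalence relation, which is where the asymmetry of $\oplus$ (left operand loses its two endpoints, right operand survives) would otherwise make a direct comparison painful. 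Second, you correctly isolate the one genuinely delicate configuration, namely when the two distinguished edges of $F_2$ share a vertex, so that a class of $Q$ absorbs labelled points from all three graphs and the ``corresponding'' edges $(a'_2,b'_2)$ and $(a'_3,b'_3)$ are themselves redirected edges; one can check (as you indicate) that both sides still elect the same surviving representative, the one coming from $F_3$. The remaining work you defer is pure bookkeeping on the edge-redirection clauses, and it goes through as you describe. If anything, your analysis exposes that the lemma is less ``straightforward'' than the paper suggests, precisely because of the shared-vertex case and because Definition~\ref{de:jointgraphs} is not symmetric in its two arguments.
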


\begin{proof}
Straight forward.
\end{proof}

 Lemma \ref{le:associativity} says that $\oplus$ is associative in some sense. 
 As a consequence of associativity, this notation is unambiguous:
\[
F_1(a_1,b_1)\oplus (a_2,b_2)F_2(a_3,b_3)\oplus (a_4,b_4)F_3
\] 
We use here the notation $(a_2,b_2)$ (for instance)
instead of $(a'_2,b'_2)$ as in equation \ref{eq:assoc}
because the late is innecesarily cumbersome.

We can also identify two different edges of the same graph.
Notice the following is not a binary operation over graphs:
\begin{definition}\label{de:ident}
If $G$ is a graph and $(a,b),(c,d)$ are two of its edges, we define
the graph $F=G[(a,b)\sim (a',b')]$ as follows:
\begin{itemize}\denselist
\item $V_F$ is the set of blocks in the following partition of $V_G$:
	\[
	V_G=\set{a,a'}\cup\set{b,b'}\cup\setdef{\set{u}\in \wp(V_G)}{u\not\in\set{a,b,a',b'}}
	\]
\item $E_F=\setdef{([u],[v])\in V_F^2}{(u,v)\in E_G}$ being $[u]$ the equivalence class of $u$
	for all $u\in V_G$.
\end{itemize}
\end{definition}

In figure \ref{fig:example-ident2}, $F$ is $G[(2,3)\sim (0,6)]$, the result of
the identification of edge $(2,3)$ with edge $(0,6)$

\begin{figure}
\begin{center}  
\includegraphics[height=3.25in,width=4.25in,angle=0]{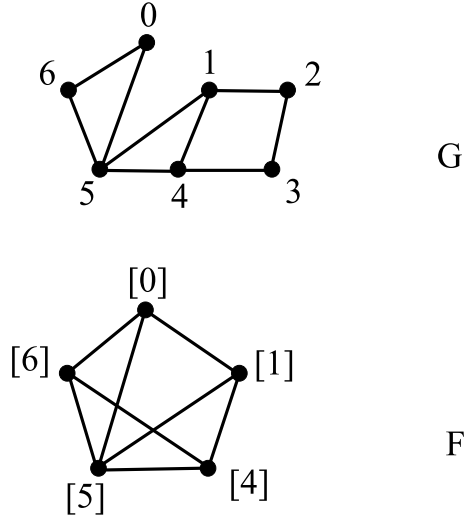}  
\caption{\small \sl $F=G[(2,3)\sim (0,6)]$\label{fig:example-ident2}}  
\end{center}  
\end{figure}

\subsection{Senders and determiners}

Definitions in this subsection correspond to the ones given in \cite{Burr}.

A {\em 2-coloring} of the edges of a graph $F=\tup{V_F,E_F}$, is a function
$\mathbf c:E_F\longrightarrow \set{0,1}$. Informally, we will say the edge $(a,b)$
is ``red'' if $\mathbf c(a,b)=0$ and ``blue'' otherwise.

Given two graphs $G,H$ we say that a 2-coloring $\mathbf c$ of the edges of $F$ is
{\em $(G,H)$-good} if there is no red subgraph of $F$ isomorphic to $G$ and 
no blue subgraph of $F$ isomorphic to $H$ according to $\mathbf c$.
We write $F\rightarrow (G,H)$ if $F$ has no $(G,H)$-good colorings
and we write $F\nrightarrow (G,H)$ if the contrary holds. 

A graph $F$ is {\em $(G,H)$-minimal} if $F\rightarrow(G,H)$ but
$F'\nrightarrow(G,H)$ for every graph $F'$ properly contained in $F$. 
The class of all $(G,H)$-minimal graphs is denoted as $\mathcal R(G,H)$.

If $G$ and $H$ are two graphs, a {\em $(G,H,f)$-determiner} or
simply a $(G,H)$-determiner is a graph $F$ with a special edge $f$
such that:
\begin{enumerate}\denselist
\item There is a $(G,H)$-good coloring for $F$, and
\item in every $(G,H)$-good coloring, $f$ is always red.
\end{enumerate}
We then say that $f$ is the {\em signal edge} of $F$.
On the other hand, a graph $F$
with special edges $e$ and $f$ is a {\em negative $(G,H,e,f)$-sender} if
\begin{enumerate}\denselist
\item There is a $(G,H)$-good coloring for $F$, 
\item in every $(G,H)$-good coloring, $e$ and $f$ have different colors, and\label{neg-sender-different}
\item $F$ has a $(G,H)$-good coloring where $e$ is red and another one where $e$ is blue.
\end{enumerate}
If we change condition \ref{neg-sender-different} to:
\begin{quote}
`In every $(G,H)$-good coloring, $e$ and $f$ have the same color.'
\end{quote}
then $F$ is a {\em positive $(G,H,e,f)$-sender}.

If $F$ is either a positive or negative $(G,H,e,f)$-sender we will say that
$e$ and $f$ are the {\em signal edges} of $F$.
When we are referring to senders (whether they are positive or negative), we will just write
$(G,H)$-sender instead of $(G,H,e,f)$-sender.

\begin{definition}
A negative (positive) $(G,H)$-sender $F$ is {\em minimal} if $F'$ is not a negative (positive) $(G,H)$-sender
for every $F'\subset F$. 
\end{definition}

Our next result is related to this concept.

\begin{lemma}
If a pair of graphs $G,H$ has a negative (resp. positive) $(G,H)$-sender,
then it has a minimal negative (resp. positive) $(G,H)$-sender
\end{lemma}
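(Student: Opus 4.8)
The plan is to exploit the finiteness of the graphs together with the well-foundedness of the proper-subgraph relation: since the statement only asserts the existence of a minimal sender, no combinatorial construction is needed, and the whole argument reduces to descending to a minimal element of a finite partial order.

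First I would fix a negative $(G,H)$-sender $F$ with signal edges $e,f$, the positive case being verbatim identical. Consider the collection
\[
\mathcal{S}=\setdef{F'}{F'\subseteq F\text{ and }F'\text{ is a negative }(G,H)\text{-sender}}.
\]
Because $F\in\mathcal{S}$, this set is nonempty, and because the finite graph $F$ has only finitely many subgraphs, $\mathcal{S}$ is finite. I would then pick $F^\ast\in\mathcal{S}$ minimizing the measure $\mu(F')=|V_{F'}|+|E_{F'}|$; such a minimizer exists since $\setdef{\mu(F')}{F'\in\mathcal{S}}$ is a nonempty finite subset of $\nat$ and hence has a least element.

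It remains to check that $F^\ast$ is minimal. If some $F''\subset F^\ast$ were itself a negative $(G,H)$-sender, then $F''\subseteq F$ would force $F''\in\mathcal{S}$, while the proper containment $F''\subset F^\ast$ gives $\mu(F'')<\mu(F^\ast)$, contradicting the choice of $F^\ast$. Thus no proper subgraph of $F^\ast$ is a negative sender, which is exactly the defining condition. Note that the signal edges of a competing subgraph $F''$ need not coincide with $e,f$; this is harmless, since minimality only forbids proper subgraphs from being senders under any choice of signal edges. I expect no genuine obstacle here: the argument is a routine instance of the fact that a strict order on a finite set has minimal elements, and the positive case follows by replacing the different-colour condition on $e,f$ by the same-colour condition throughout.
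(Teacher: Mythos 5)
Your proof is correct and takes essentially the same approach as the paper: the paper's own argument is a one-line appeal to well-ordering and the nonemptiness of the set of $(G,H)$-senders, and your version is just a careful, explicit instantiation of that descent-to-a-minimal-element idea via the finiteness of the subgraph poset.
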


\begin{proof}
This follows from the fact that graphs can be well ordered and the fact
that the set of negative (resp. positive) $(G,H)$-senders is non-empty.
\end{proof}

The existence of senders and determiners for some families of pairs of graphs is stablished in
\cite{Burr}.

\subsection{First Order equivalence between structures}

We want to give a sufficient condition for arrowing problems to be not first order definable.
A well known strategy to prove non definability in first order for some problem $A$ is, given any $r\in\mathbb N$,
showing that no formula with $r$ nested quantifiers can define $A$.
The number of nested quantifiers in a formula is called its {\em quantification rank} (q.r.). 
Formally, we can define it inductively \cite{Ebbinghaus-Flum}:
\begin{enumerate}\denselist
\item $\qr (\phi)=0$ for every atomic formula $\phi$,
\item $\qr(\neg\phi)=\qr(\phi)$,
\item $\qr(\phi\land\psi)=\qr(\phi\lor\psi)=\max\set{\qr(\phi),\qr(\psi)}$ and
\item $\qr(\forall x\phi)=\qr(\exists x\phi)=\qr(\phi)+1$ 
		(provided $x$ is a first order variable).
\end{enumerate}

We say that two finite $\tau$-structures are {\em FO-$r$-equivalent} if 
\[
\A\models\phi \iff \B \models\phi
\]
for every $\FO(\tau)$ sentence $\phi$ with $\qr(\phi)\leq r$. 
If $\A$ and $\B$ are FO-$r$-equivalent we write $A\equiv_r^{\FO}\B$.

Hence the strategy mentioned above about non definability in FO is formally stablished
in the following Proposition:
\begin{proposition}\cite{immerman:book}\label{prop:elem-equiv}
Let $\Pi$ be a subset of finite $\tau$-structures. 
Suppose that, for every natural number $r$, there are two finite $\tau$-structures
$\A\in\Pi$ and $\B\in\struc(\tau)- \Pi$ such that $\A\equiv_r^{\FO}\B$.
Then $\Pi$ is not first order definable.
\end{proposition}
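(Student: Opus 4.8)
The plan is to argue by contraposition: I will assume that $\Pi$ \emph{is} first order definable and derive a contradiction with the hypothesis. By definition, if $\Pi$ is first order definable then there is a single $\FO(\tau)$ sentence $\phi$ such that, for every finite $\tau$-structure $\A$, we have $\A\in\Pi$ if and only if $\A\models\phi$.

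First I would extract the one crucial parameter: since $\phi$ is a fixed formula, its quantification rank $r_0=\qr(\phi)$ is a fixed natural number. The key observation is that the hypothesis is quantified over \emph{all} natural numbers $r$, so in particular I may instantiate it at $r=r_0$. This produces two finite $\tau$-structures $\A\in\Pi$ and $\B\in\struc(\tau)-\Pi$ with $\A\equiv_{r_0}^{\FO}\B$.

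Next I would read off the two sides of the contradiction from membership in $\Pi$. Because $\A\in\Pi$, the defining property of $\phi$ gives $\A\models\phi$; because $\B\in\struc(\tau)-\Pi$, that same defining property gives $\B\nmodels\phi$. On the other hand, the relation $\A\equiv_{r_0}^{\FO}\B$ asserts precisely that $\A\models\psi\iff\B\models\psi$ for every $\FO(\tau)$ sentence $\psi$ with $\qr(\psi)\leq r_0$. Applying this to $\psi=\phi$, which qualifies since $\qr(\phi)=r_0\leq r_0$, yields $\A\models\phi\iff\B\models\phi$. Combined with $\A\models\phi$ this forces $\B\models\phi$, contradicting $\B\nmodels\phi$.

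Since the assumption that $\Pi$ is first order definable leads to a contradiction, I conclude that no such sentence $\phi$ exists, i.e.\ $\Pi$ is not first order definable. There is no genuine obstacle in this argument; the only point that requires care is the quantifier interchange, namely that the \emph{universal} quantification over $r$ in the hypothesis lets us fit the fixed rank $r_0$ of the hypothetical defining sentence after the fact. Everything else is a direct unwinding of the definitions of first order definability, quantification rank, and the relation $\equiv_r^{\FO}$.
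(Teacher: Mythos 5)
Your proof is correct and is the standard contrapositive argument; the paper itself states this proposition with only a citation and gives no proof, but your unwinding (fix the hypothetical defining sentence $\phi$, instantiate the hypothesis at $r_0=\qr(\phi)$, and derive $\A\models\phi\iff\B\models\phi$ against $\A\in\Pi$, $\B\notin\Pi$) is exactly the intended argument.
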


One way to prove FO-$r$-equivalence between structures, is via the Hanf's Theorem.
It states that two structures are FO-$r$-equivalent if they are, in a sense,
locally isomorphic.

Suppose $\tau$ is a vocabulary and $\A$ is a finite $\tau$-structure.
We define the {\em Gaifman graph} corresponding to $\A$ as the undirected graph
$\G_\A=\tup{V,E}$ where:
\begin{itemize}\denselist
\item $V=|\A|$, and
\item $(a,b)\in E$ if and only if there is a $k$-ary relation $R$ in $\tau$ and a
	$k$-tuple $(a_1,\ldots, a_k)\in R^\A$ such that $a=a_i, b=a_j$ for some pair $i,j\in\set{1,\ldots, k}$.
\end{itemize}
If $\A$ is an undirected graph, for instance, $\G_\A$ and $\A$ are the same.

Given two elements $a,b$ in the universe of $\A$, the {\em distance} $d(a,b)$
is defined as the length of the shortest path joining $a$ and $b$ in $\G_\A$. If they are in 
different connected components of $\G_\A$ then we define $d(a,b)=\infty$.
For each $a\in|\A|$ we define the {\em $r$-ball centered at a} as the set:
\[
B^\A(a,r):=\setdef{b\in|\A|}{d(a,b)\leq r}
\]

If $c$ is a constant not in $\tau$, and $\tau^*=\tau\cup\set{c}$, we define the 
{\em $r$-neighborhood of $a$ in $\A$} as the finite $\tau^*$-structure
\[
N^\A(a,r):=\tup{\A\upharpoonright_{B^\A(a,r)},a}
\]
The {\em isomorphism type} of a structure $\A$ is the set of all literals
satisfied by $\A$. The isomorphism type of $N^\A(a,r)$ is
the {\em $r$-type of $a$ in $\A$}. 
If, for instance, $\A$ is a graph and $v$ is one of its vertices,
then the $r$-type of $v$ is the set of atomic formulas and negations of atomic formulas
describing the edges of the substructure $N^\A(a,r)$.
Given two $\tau$-structures $\A$ and $\B$, an element $a\in|\A|$ has the same 
$r$-type as an element $b\in|\B|$ if there is an isomorphism $f$ between 
$N^\A(a,r)$ and $N^\B(b,r)$ such that $f(a)=b$.

Denote by $|\A|_\Delta$ the subset of elements of $|\A|$ with $r$-type $\Delta$.
We say that $\A$ and $\B$ are {\em $r$-equivalent} if 
there is a bijection $f:||\A||\longrightarrow ||\B||$ such that the $r$-type of $a$
is the same as the $r$-type of $f(a)$ for all $a\in|\A|$ i.e. if 
$|\A|_\Delta$ and $|\B|_\Delta$ have the same cardinality.

\begin{theorem}[Hanf's Theorem]\cite{Ebbinghaus-Flum}
Suppose $\A,\B$ are two finite $\tau$-structures and $r>0$ is a natural number.
If $\A$ and $\B$ are $2^r$-equivalent, then $\A\equiv_r^\FO\B$.
\end{theorem}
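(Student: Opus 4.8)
The plan is to prove the theorem through the Ehrenfeucht--Fra\"{\i}ss\'e characterization of FO-$r$-equivalence: it suffices to show that, whenever $\A$ and $\B$ are $2^r$-equivalent, Duplicator (Player II) has a winning strategy in the $r$-round Ehrenfeucht--Fra\"{\i}ss\'e game on $\A$ and $\B$, since this is equivalent to $\A\equiv_r^\FO\B$. I would therefore describe such a strategy and verify that Duplicator can always follow it. First, though, I would record a preliminary reduction of the hypothesis: because an $s$-neighborhood $N^\A(a,s)$ is a substructure of any larger neighborhood, the $2^r$-type of an element determines its $s$-type for every $s\le 2^r$, so the bijection witnessing $2^r$-equivalence also witnesses $s$-equivalence. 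Hence $\A$ and $\B$ carry the same number of elements of each $s$-type for all $s\le 2^r$, and this is the only consequence of the hypothesis the strategy will use.

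The heart of the argument is a shrinking-radius invariant. After $i$ rounds, with pebbles $a_1,\dots,a_i$ placed in $\A$ and $b_1,\dots,b_i$ placed in $\B$, Duplicator maintains that the $2^{r-i}$-neighborhoods of the two chosen tuples are isomorphic via a map sending each $a_j$ to $b_j$; at $i=0$ this holds vacuously. For round $i+1$, say Spoiler plays $c$ in $\A$ (the symmetric case is identical). I would split according to the distance from $c$ to the previously pebbled elements, using threshold $2^{r-i-1}$. In the \emph{close} case, $d(c,\{a_1,\dots,a_i\})\le 2^{r-i-1}$, so by the triangle inequality $B^\A(c,2^{r-i-1})$ sits inside the old $2^{r-i}$-neighborhood; Duplicator answers with the image of $c$ under the existing isomorphism, and restricting that isomorphism to the smaller balls yields the invariant at radius $2^{r-i-1}$. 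In the \emph{far} case, Duplicator must instead produce an element of $\B$ that has the same $2^{r-i-1}$-type as $c$ and lies at distance greater than $2^{r-i-1}$ from every $b_j$; the new small neighborhoods then attach as disjoint pieces to the already-isomorphic neighborhood of the old tuple, again restoring the invariant.

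The hard part will be guaranteeing the required element in the far case, and this is where the counting hypothesis enters. I would argue that the isomorphism between the $2^{r-i}$-neighborhoods of the tuples induces, by restriction to balls of radius $2^{r-i-1}$, a type-preserving bijection between the elements of each fixed $2^{r-i-1}$-type lying \emph{close} to the tuple in $\A$ and those lying close to the tuple in $\B$; consequently the numbers of close elements of each such type agree. Since the total numbers of elements of each $2^{r-i-1}$-type agree as well, by the preliminary reduction applied with $s=2^{r-i-1}\le 2^r$, the numbers of \emph{far} elements of each type must agree too. As $c$ is itself a far element of its type in $\A$, there is at least one far element of the same type in $\B$, which Duplicator selects.

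Finally I would close the induction. Duplicator maintains the invariant for all $r$ rounds, after which the radius has shrunk to $2^{0}=1$; thus the pebble correspondence is an isomorphism of the $1$-neighborhoods of the two pebbled tuples, and in particular it preserves all edges among the pebbles, i.e.\ it is a partial isomorphism. This is exactly Duplicator's winning condition, so Duplicator wins the $r$-round game and $\A\equiv_r^\FO\B$, as required. The delicate points to be careful about in a full write-up are the distance bookkeeping that makes the halving of radii consistent across the two cases, and the verification that the induced neighborhood isomorphism genuinely sends close elements to close elements while preserving their smaller types.
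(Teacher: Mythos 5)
The paper does not actually prove this theorem---it is imported from \cite{Ebbinghaus-Flum} as a black box---so your proposal has to stand on its own, and it contains a genuine gap. The architecture is the right one (Ehrenfeucht--Fra\"{\i}ss\'e game, shrinking-radius invariant, counting realized types to supply Duplicator's move in the far case), the preliminary reduction is fine, and the close case works: if $d(c,\bar a)\le 2^{r-i-1}$ then indeed $B^\A(c,2^{r-i-1})\subseteq B^\A(\bar a,2^{r-i})$. The failure is in the far case. With new radius $s=2^{r-i-1}$ and the split at distance $s$, ``far'' only guarantees $d(c,a_j)\ge s+1$, whereas for $B^\A(c,s)$ to be disjoint from the balls $B^\A(a_j,s)$ you need $d(c,a_j)>2s$ (and slightly more to rule out edges of the Gaifman graph running between the two balls). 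Concretely, Spoiler can play $c$ at distance exactly $s+1$ from some $a_j$: the two radius-$s$ balls then share the midpoint of a shortest path, so $N^\A(\bar a c,s)$ contains a path joining $c$ to $a_j$; an element $d$ of $\B$ chosen only by its $s$-type and by being ``far'' from $\bar b$ may lie at distance $>2s+1$ from every $b_j$, in which case the two new neighborhoods are not isomorphic over the pebbles and Spoiler wins the remaining rounds. The counting hypothesis controls the new element's own type but says nothing about how its ball attaches to the old ones, so the sentence ``the new small neighborhoods then attach as disjoint pieces'' is precisely the unproved---and in general false---step; the ``delicate distance bookkeeping'' you defer cannot be made consistent with halving, because the close case needs the threshold $\le S-s$ while the far case needs it $\ge 2s$, forcing $S\ge 3s$.

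This is exactly why the standard argument shrinks the radius by a factor of $3$: with old radius $3s$ and the split at $d(c,\bar a)\le 2s$, the close case still gives $B(c,s)\subseteq B(\bar a,3s)$ while the far case now makes the radius-$s$ balls around $c$ and around the $a_j$ disjoint. Correspondingly, the version of Hanf's theorem proved in \cite{Ebbinghaus-Flum} hypothesizes agreement of $3^r$-ball types (plus a degree bound that your exact-bijection hypothesis makes unnecessary). To repair your write-up, either prove the $3^r$ version---which is all the rest of the paper needs, since Lemma \ref{le:main} produces vertices at arbitrarily large distance and only the existence of \emph{some} function of $r$ matters---or be prepared to give a genuinely sharper locality analysis if you insist on the radius $2^r$ stated here; the naive game argument does not deliver it.
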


\section{First Order definability}

We present the main result in this section.

\begin{definition}
Let $\Omega$ be a class of graphs. We say that $\Omega$ {\em has
negative (positive) senders with non-adjacent signals}
if for every pair of graphs $(G,H)$ of $\Omega$
there is a negative (positive) $(G,H)$-sender such that
its signal edges have no common vertex.
\end{definition}

\begin{definition}\cite{Berge}
We say that a graph is $k$-connected if it remains connected when we remove any set of $k-1$ vertices, but
gets disconnected if we remove $k$ vertices.
\end{definition}

\begin{lemma}\label{le:main} 
Suppose $\Omega$ is a class of $k$-connected graphs with $k\geq 2$ which has negative senders with 
non-adjacent signals. 

If $n>2$ is a natural number and $G,H$ is a pair of graphs in $\Omega$,
then there is a $(G,H)$-minimal graph $F$ with a pair of vertices $u$ and $v$
such that $d(u,v)\geq n$. 
\end{lemma}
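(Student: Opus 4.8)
The plan is to manufacture an arrowing graph of large diameter out of a single negative sender, and then pass to a minimal subgraph while checking that minimality cannot collapse the diameter.

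\emph{Building the gadget.} First I would take a negative $(G,H)$-sender $S$ whose signal edges $e,f$ share no vertex, which exists by the hypothesis on $\Omega$. I would chain an \emph{odd} number $m=2t+1$ of copies $S_1,\dots,S_m$ of $S$ with the operation $\oplus$, identifying the signal $f$ of $S_i$ with the signal $e$ of $S_{i+1}$, and call the result $S^{\ast}$, with terminal signal edges $e^{\ast}$ (the edge $e$ of $S_1$) and $f^{\ast}$ (the edge $f$ of $S_m$). A composition of senders is again a sender \cite{Burr}, and the odd parity makes $S^{\ast}$ again a \emph{negative} sender: in every good coloring $e^{\ast}$ and $f^{\ast}$ get opposite colors. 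Since consecutive copies meet only in one non-adjacent-signal edge, the distance in $S^{\ast}$ between $e^{\ast}$ and $f^{\ast}$ grows linearly in $m$, so I would fix $m$ large enough that this distance exceeds both $2n$ and $\max\{\mathrm{diam}(G),\mathrm{diam}(H)\}$. Finally I would set $F^{\ast}=S^{\ast}[e^{\ast}\sim f^{\ast}]$, identifying the two terminal signals into a single edge $g$ (Definition \ref{de:ident}).

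\emph{Showing $F^{\ast}$ arrows.} Any $2$-coloring of $F^{\ast}$ pulls back to a $2$-coloring of $S^{\ast}$ in which $e^{\ast}$ and $f^{\ast}$ carry the common color of $g$. I would argue that if the coloring of $F^{\ast}$ is good then the pulled-back coloring of $S^{\ast}$ is good as well: any copy of $G$ or $H$ in $S^{\ast}$ has diameter at most $\max\{\mathrm{diam}(G),\mathrm{diam}(H)\}$ and so cannot contain two vertices that the identification merges (those lie too far apart in $S^{\ast}$), whence the quotient map carries it isomorphically to a monochromatic copy in $F^{\ast}$. Thus a good coloring of $F^{\ast}$ would yield a good coloring of $S^{\ast}$ with $e^{\ast},f^{\ast}$ equally colored, contradicting negativity; so $F^{\ast}$ has no good coloring and $F^{\ast}\rightarrow(G,H)$.

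\emph{Minimal subgraph and distance.} Because arrowing is inherited by supergraphs, the finite family of arrowing subgraphs of $F^{\ast}$ has a minimal element $F$, which is $(G,H)$-minimal. To bound its diameter, for each $j$ I would delete from $F^{\ast}$ the interior edges $\mathrm{int}(S_j)$ of the single copy $S_j$: this breaks the global constraint cycle into an \emph{open} chain of senders, which is itself a sender and hence has a good coloring, good in $F^{\ast}$ minus $\mathrm{int}(S_j)$. Therefore no arrowing subgraph, in particular not $F$, can omit all interior edges of $S_j$, so $F$ contains an interior edge $\varepsilon_j$ of every copy. Taking $\varepsilon_1$ and $\varepsilon_{t+1}$ from cyclically antipodal copies and letting $u,v$ be their endpoints, and using that deleting edges only increases distances ($d_F(u,v)\geq d_{F^{\ast}}(u,v)$), any path in the cyclic chain between interior vertices of antipodal copies crosses about $m/2$ copies, each contributing at least one to its length, giving $d_F(u,v)\geq n$.

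The hard part will be the middle step: proving that the self-identification $S^{\ast}[e^{\ast}\sim f^{\ast}]$ neither destroys nor spuriously creates monochromatic copies of $G$ or $H$, so that good colorings of $F^{\ast}$ match those of $S^{\ast}$. This is where both hypotheses enter — the large distance between $e^{\ast}$ and $f^{\ast}$ stops any bounded-diameter copy of $G$ or $H$ from collapsing under the identification, while the $k$-connectivity ($k\geq2$) of the graphs in $\Omega$ is what forbids a copy of $G$ or $H$ from \emph{straddling} the identified edge $g$ through the merged vertices used as a cut. The same straddling check near $g$ must also be carried out for the open-chain colorings invoked in the distance step.
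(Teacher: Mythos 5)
Your construction is the same one the paper uses — an odd chain of copies of a negative sender with non\nobreakdash-adjacent signals, glued end\nobreakdash-to\nobreakdash-end by identifying the two terminal signal edges, with the odd parity forcing a contradiction in any good coloring — and your verification that the resulting cyclic gadget arrows (pulling a good coloring back along the quotient, using that a copy of $G$ or $H$ has bounded diameter and so cannot meet both identified edges) matches the paper's step showing $F$ has no good coloring. Where you genuinely diverge is the minimality step. The paper starts from a \emph{minimal} negative sender (whose existence it records in a separate lemma) and then argues directly that the cyclic chain is itself $(G,H)$-minimal: deleting any edge lands in some copy $F'_j$, which by minimality of the sender admits a good coloring with its two signals equally colored, and this propagates around the broken chain. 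You instead take an arbitrary sender, pass to a minimal arrowing subgraph $F$ of the gadget, and then must do extra work to show the distance bound survives — your observation that deleting all interior edges of any one copy opens the cycle into an (even-length, hence positive) sender with a good coloring, so that $F$ must retain an interior edge in every copy, is exactly the right repair, and the antipodal choice of copies then gives $d_F(u,v)\geq n$. Both routes are sound; the paper's buys an explicit minimal witness at the price of the minimal-sender lemma and the slightly delicate claim that a non-sender obtained by edge deletion admits a good coloring with equal signals, while yours avoids minimal senders entirely at the price of the localization argument for $F$ inside the gadget. Your closing paragraph correctly identifies the one point both versions must not skate over: that neither the chaining $\oplus$ nor the final identification creates new copies of $G$ or $H$ straddling a shared edge, which is precisely where the $k$-connectivity hypothesis ($k\geq 2$) is consumed.
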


\begin{proof}
Let $F'$ be a minimal negative $(G,H,e,f)$-sender, 
such that its signals are not adjacent. 
Consider a pair of vertices $a$ and $b$ incident with $e$ and $f$ respectively.

Suppose that $x$ is any edge of $F'$ different from $e$ and $f$. 
It is easy to see that the graph $\tilde F$, obtained by removing the edge $x$ from $F'$,
has $(G,H)$-good colorings. Moreover, there is a $(G,H)$-good coloring for $F'$ such
that $e$ and $f$ have the same color, because of the minimality of $F'$ as a negative sender.

Consider $2n+1$ copies $F_1',F_2',\ldots, F'_{2n+1}$ of $F'$. 
Denote the signal edges and the distinguished vertices of $F'_i$ as $e_i,f_i$ and $a_i,b_i$ respectively.
Without loss of generality we can assume that $e_i=(a_i,u_i)$ and $f_i=(b_i,v_i)$
for every $1\leq i\leq 2n+1$. 

Now iterate the operation given in Definition \ref{de:jointgraphs} and form the graph
\[
F''=F'_1 (b_1,v_1) \oplus (a_2,u_2) F'_2 (b_2,v_2) \oplus (a_3,u_3)F'_3\ldots 
F'_{2n}(b_{2n},v_{2n})\oplus(a_{2n+1},u_{2n+1})F'_{2n+1}
\]
We can think of $F'$ as a ``chain'' with $2n+1$ ``links'', where each copy of $F$ is a link joined to the
following link by a signal edge.

Notice that $F''$ is still a negative sender with signal edges $e_1$ and $f_{2n+1}$, because:
\begin{enumerate}[i)]\denselist
\item Every copy of $F'$ has $(G,H)$-good colorings,
\item $e_1$ and $f_{2\ell+1}$ have different colors in any $(G,H)$-good coloring, for $1\leq\ell\leq n$ and
\item no ``new'' copies of $G$ and $H$ are formed when we link copies of $F'$ together, because 
	$F$ and $G$ are at least 2-connected, so $F''$ has $(G,H)$-good colorings.
\end{enumerate}
Also notice that, since $e_j$ and $f_j$ are non-adjacent, $d(a_j,b_j)\geq 1$ thus $d(a_1,b_{2n+1})\geq 2n+1$
(we do not follow notation given for vertices in Definition \ref{de:jointgraphs} since there is no risk of confusion).

Now, as in Definition \ref{de:ident}, form the graph $F=F''[(a_1,u_1)\sim(b_{2n+1},v_{2n+1})]$.
In $F$, call $u$ to the vertex given by $[a_1]$ and call $v$ to the vertex given by $[a_n]$.
It is easy to see that $u$ is in a cycle with length at least $2n+1$ and $d(u,v)\geq n$.

We want to show that $F\in \mathcal R(G,H)$. 
First, notice that $F$ can not have any $(G,H)$-good 
coloring. If it were not the case, then we would have a $(G,H)$-good coloring
for $F''$ where $e_1$ and $f_{2n+1}$ have the same color.
Secondly, suppose we delete an edge $x$ from $F$. Then we must delete it from a copy
of $F'$, say $F'_j$, the $j$-th copy of $F_j$. Thus $F'_j-\set{x}$ is not a negative
sender, due to $F'$ minimality. We are breaking the chain of negative senders formed
by $F''$, so now we have a $(G,H)$-good coloring for $F_j$ where $e_j$ and $f_j$
have the same color and a $(G,H)$-good coloring for $F''$ where $e_1$ and $f_{2n+1}$
have the same color. Therefore, there is a good coloring for $F-\set{x}$ hence $F$
is $(G,H)$-minimal.
 
\end{proof}

\begin{theorem}[Main]\label{th:main}
If $\Omega$ is a class of $k$-connected graphs with $k\geq 2$ which has negative senders
with non-coincident signals, then the class $\text{\sc NonArrowing}_\Omega(G,H)$
is not first order definable for any pair $G,H$ in $\Omega$.
\end{theorem}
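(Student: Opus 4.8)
The plan is to apply Proposition \ref{prop:elem-equiv} to the set $\Pi=\textsc{NonArrowing}_\Omega(G,H)$: for every $r\in\mathbb N$ I must produce two graphs $\mathcal A\in\Pi$ and $\mathcal B\in\struc(\sigma)-\Pi$ with $\mathcal A\equiv_r^{\FO}\mathcal B$, and by Hanf's Theorem it suffices to make $\mathcal A$ and $\mathcal B$ $2^r$-equivalent. The building block is a minimal negative $(G,H,e,f)$-sender $F'$ whose signals are non-adjacent (as in the hypothesis of Lemma \ref{le:main}); it exists by hypothesis together with the existence of minimal senders. Writing $e=(a,u)$ and $f=(b,v)$, I glue $N$ copies of $F'$ into a cyclic chain $C_N$ exactly as in the construction of Lemma \ref{le:main}: identify the signal $f_i$ of the $i$-th copy with the signal $e_{i+1}$ of the next via Definition \ref{de:jointgraphs}, reading indices cyclically and closing the last copy back to the first via Definition \ref{de:ident}.

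The structural dichotomy I exploit is the parity argument already implicit in Lemma \ref{le:main}. In any $(G,H)$-good coloring of $C_N$ the two signals of each copy receive opposite colors (negative sender), while consecutive copies share the identified signal edge; hence the successive signal colors must alternate around the cycle, which is possible if and only if $N$ is even. Thus $C_N\rightarrow(G,H)$ when $N$ is odd, whereas for $N$ even the alternating assignment extends to an honest good coloring — no red $G$ or blue $H$ is created across a gluing, since $G$ and $H$ are $k$-connected with $k\ge 2$ and so cannot straddle a single identified edge — and therefore $C_N\nrightarrow(G,H)$.

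Now fix $r$ and choose an odd integer $M$ large enough that every cycle of $M$ copies (hence also of $2M$ copies) has Gaifman circumference greater than $2\cdot 2^r$; since non-adjacency of the signals forces $d(a_j,b_j)\ge 1$ for each link, the value $M>2^{r+1}$ suffices. Set $\mathcal A=C_{2M}$ and $\mathcal B=C_M\sqcup C_M$. Then $\mathcal A$ is a single even cycle, so $\mathcal A\nrightarrow(G,H)$ and $\mathcal A\in\Pi$; meanwhile each component of $\mathcal B$ is an odd cycle that arrows, and a disjoint union has a $(G,H)$-good coloring if and only if every component does, so $\mathcal B\rightarrow(G,H)$ and $\mathcal B\notin\Pi$. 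Both graphs contain exactly $2M$ copies of $F'$, hence the same total number of vertices.

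It remains to verify $2^r$-equivalence, which I expect to be the main obstacle. Because each cycle is longer than $2\cdot 2^r$, no $2^r$-ball wraps around its cycle or reaches a second component, so the $2^r$-neighborhood of any vertex is isomorphic to the corresponding neighborhood in a bi-infinite chain of copies of $F'$ and depends only on the position of the vertex inside its copy, not on which sufficiently long cycle contains it. Consequently, for every $2^r$-type $\Delta$ realized by such a position, each copy contributes the same count, giving exactly $2M$ vertices of type $\Delta$ in $\mathcal A$ (one cycle of $2M$ copies) and exactly $M+M=2M$ in $\mathcal B$ (two cycles of $M$ copies). The type multiplicities coincide, so $\mathcal A$ and $\mathcal B$ are $2^r$-equivalent; Hanf's Theorem then yields $\mathcal A\equiv_r^{\FO}\mathcal B$, and since $r$ was arbitrary, Proposition \ref{prop:elem-equiv} shows that $\Pi=\textsc{NonArrowing}_\Omega(G,H)$ is not first order definable. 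The delicate points to nail down are that the per-link distance is bounded below (so the circumference grows linearly in the number of links) and that the finite list of within-copy $2^r$-types is literally identical for $C_M$ and $C_{2M}$.
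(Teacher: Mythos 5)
Your proof is essentially correct, but it takes a genuinely different route from the paper's. The paper invokes Lemma \ref{le:main} as a black box to obtain a $(G,H)$-\emph{minimal} graph $F$ containing two vertices $u,v$ at distance at least $2^{r+1}$, and then plays the vertex-deletion game: $F_1=F\sqcup(F-\set{u,v})$ arrows because its first component does, while $F_2=(F-\set{u})\sqcup(F-\set{v})$ does not arrow precisely because minimality makes every proper subgraph of $F$ non-arrowing; the Hanf bijection then swaps the punctured neighborhoods of $u$ and $v$ between components. You instead work with the cyclic sender chains directly and exploit the parity obstruction: an odd cycle of negative-sender links forces an impossible alternation of signal colors and hence arrows, an even cycle admits the alternating assignment and hence does not, and you compare one even cycle of $2M$ links against two odd cycles of $M$ links (the classical ``one long even cycle versus two odd cycles'' locality argument). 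A real advantage of your version is that it bypasses the hardest part of Lemma \ref{le:main} entirely --- you never need the glued chain to be $(G,H)$-minimal, only conditions 2 and 3 in the definition of a negative sender, so your appeal to a \emph{minimal} sender $F'$ is harmless but unnecessary; the paper's version, in exchange, isolates a reusable combinatorial lemma about $\mathcal R(G,H)$ and a cleaner type-preserving bijection. Two caveats, both of which you partly flag yourself: your bound $M>2^{r+1}$ is borderline (a $2^r$-ball can span up to $2\cdot 2^r+1$ links, so take $M$ a few units larger to guarantee no wrap-around), and the counting of $2^r$-types per link must assign each shared signal edge to exactly one link --- this works because both structures have exactly $2M$ links and $2M$ identifications. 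Finally, your claim that no copy of $G$ or $H$ straddles a glued edge because $k\geq 2$ is exactly the paper's own claim (iii) in Lemma \ref{le:main}, so you inherit rather than introduce whatever weakness it has (for $2$-connected but not $3$-connected targets, a copy using both endpoints of the identified edge is not obviously excluded).
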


\begin{proof}
For an arbitrary natural number $r$, we will use Hanf's Theorem to prove there are two instances of 
$\text{\sc NonArrowing}_\Omega{(G,H)}$, one of them negative and one of
them positive, that can not be distinguished by any FO sentence with quantification rank $r$.

Suppose $F$ is a $(G,H)$-minimal graph with two distinguished vertices $u$ and $v$ such that
the distance between $u$ and $v$ is at least $2^{r+1}$. This graph exists because of
the result in Lemma \ref{le:main}. Now let $F_1$ be $F\sqcup \left(F-\set{u,v}\right)$
and $F_2$ be $\left(F-\set{u}\right)\sqcup \left(F-\set{v}\right)$, where the symbol
$\sqcup$ denotes disjoint union. 
Note that, since $F$ is $(G,H)$-minimal, $F_1$ is a negative instance 
while $F_2$ is a positive instance of 
$\text{\sc NonArrowing}_\Omega{(G,H)}$.

Now, as we want to use Hanf's Theorem, we need to show that $F_1$ and $F_2$
are $2^r$-equivalent 
i.e. that there is a bijection $\phi$
mapping every vertex in $F_1$ to a vertex in $F_2$ with the
same $2^r$-type. We define $\phi$ as follows:
\begin{enumerate}[i.]
\item If $w$ is a vertex in the connected component
	of $F$ isomorphic to $F-\set{u,v}$
	and $d(u,w)\leq 2^r$ in $F$, then $\phi(w)$ is the copy of $w$
	in $F-\set{u}$. We proceed analogously if $d(v,w)\leq 2^r$ in $F$.
	
	If $w$ is neither in the neighborhood $N(u,2^r)$ nor
	in the neighborhood $N(v,2^r)$ of $F$, then $\phi(w)$
	is the copy of $w$ in the connected component
	of $F_2$ which is
	isomorphic to $F-\set{u}$.
\item If $w$ is a vertex in the connected component of $F_1$
	isomorphic to $F$ and $d(u,w)\leq 2^r$ then $\phi(w)$
	is the copy of $w$ in the connected component of $F_2$
	isomorphic to $F-\set{v}$. We proceed in an analogous
	way when $d(v,w)\leq 2^r$.
	
	If $w$ is neither in the neighborhood $N(u,2^r)$ nor
	in the neighborhood $N(v,2^r)$ of $F$, then $\phi(w)$
	is the copy of $w$ in the connected component
	of $F_2$ which is
	isomorphic to $F-\set{v}$.
\end{enumerate}
It is easy to see that this function is bijective and preserves
$2^r$-types. Then, by Hanf's Theorem, both structures are $2^r$-equivalent.
As these construction is possible for every $r\in\mathbb N$, 
we conclude by Proposition \ref{prop:elem-equiv} that 
$\text{\sc NonArrowing}_\Omega(G,H)$ is not first order definable.
 
\end{proof}

\bibliographystyle{plain}
\bibliography{bibliografia}

\begin{thebibliography}{1}

\bibitem{Berge}
Claude Berge.
\newblock {\em Graphs}.
\newblock North-Holland Mathematical Library, third edition, 1991.

\bibitem{Burr}
S.~Burr, J.~Ne\v{s}et\v{r}il, and V.~R\"{o}dl.
\newblock On the use of senders in generalized ramsey theory for graphs.
\newblock {\em Discrete Mathematics}, 54:1 -- 13, 1985.

\bibitem{Ebbinghaus-Flum}
H.~Ebbinghauss and J.~Flum.
\newblock {\em Finite Model Theory}.
\newblock Springer, 1st edition, 1991.

\bibitem{immerman:book}
N.~Immerman.
\newblock {\em Descriptive Complexity}.
\newblock Springer, 1st edition, 1998.

\bibitem{medina:thesis}
J.~Medina.
\newblock {\em A Descriptive Approach To The Class NP.}
\newblock PhD thesis, University of Massachusetts, Amherst, 1997.

\bibitem{Schaefer}
M.~Schaefer.
\newblock Graph ramsey theory and the polynomial hierarchy.
\newblock {\em Proceedings of the 31st Annual ACM Symposium on Theory of
  Computing}, 1999:592--601, 1999.

\end{thebibliography}

\end{document}